\documentclass{article}

\usepackage{arxiv}
\usepackage{amsthm}
\usepackage{graphicx}
\usepackage{url}
\usepackage[utf8]{inputenc}
\usepackage{amsfonts}
\usepackage{amsmath}
\usepackage{amssymb, bm}
\usepackage{stmaryrd}
\usepackage{amssymb}
\usepackage{mathtools, bm}
\usepackage{algorithm}
\usepackage[noend]{algorithmic}
\usepackage{color}                      % colored text and backgrounds
\usepackage{hyperref}
\usepackage[toc,page]{appendix}

\newtheorem{problem}{Problem}
\newtheorem{theorem}{Theorem}
\newtheorem{lemma}[theorem]{Lemma}

\newcommand{\ZZ}{\mathbb{Z}}
\newcommand{\RR}{\mathbb{R}}
\newcommand{\eps}{\varepsilon}

\DeclareMathOperator{\conv}{conv}
\DeclareMathOperator{\area}{area}
\DeclareMathOperator{\polylog}{polylog}

%-----------------------------------------------------------------------
% Document
%-----------------------------------------------------------------------
\begin{document}

\title{Peeling Digital Potatoes}

\author{Loïc Crombez (1) \and
Guilherme D. da Fonseca (1) \and
Yan Gérard (1)
\\( (1) Universit\'e Clermont Auvergne and LIMOS,
 Clermont-Ferrand, France)}
%
% \authorrunning{L. Crombez et al.}
% First names are abbreviated in the running head.
% If there are more than two authors, 'et al.' is used.
%
% \institute{Universit\'e Clermont Auvergne and LIMOS,
%  Clermont-Ferrand, France}

\maketitle

\begin{abstract}
The potato-peeling problem (also known as convex skull) is a fundamental computational geometry problem that consist in finding the largest convex shape inside a given polygon. The fastest algorithm to date runs in $O(n^8)$ time for a polygon with $n$ vertices that may have holes. In this paper, we consider a digital version of the problem.
A set $K \subset \mathbb{Z}^2$ is \emph{digital convex} if $\conv(K) \cap \mathbb{Z}^2 = K$, where $\conv(K)$ denotes the convex hull of $K$. Given a set $S$ of $n$ lattice points, we present polynomial time algorithms for the problems of finding the largest digital convex subset $K$ of $S$ (\emph{digital potato-peeling problem}) and the largest union of two digital convex subsets of $S$. The two algorithms take roughly $O(n^3)$ and $O(n^9)$ time, respectively. We also show that those algorithms provide an approximation to the continuous versions.
\end{abstract}

%-----------------------------------------------------------------------
\section{Introduction}
%-----------------------------------------------------------------------

The \emph{potato-peeling problem}~\cite{Goo81} (also known as \emph{convex skull}~\cite{Woo86}) consists of finding the convex polygon of maximum area that is contained inside a given polygon (possibly with holes) with $n$ vertices. The fastest exact algorithm known takes $O(n^7)$ time without holes and $O(n^8)$ if there are holes~\cite{ChY86}.
The problem is arguably the simplest geometric problem for which the fastest exact algorithm known is a polynomial of high degree and this high complexity motivated the study of approximation algorithms~\cite{CCKS17,HKKM06}. Multiple variations of the problem have been considered, including triangle-mesh~\cite{AVLS11} and orthogonal~\cite{DBBB11,WoY88} versions.
In this paper, we consider a digital geometry version of the problem.

Digital geometry is the field of mathematics that studies the geometry of points with integer coordinates, also known as \emph{lattice points} \cite{KlR04}.
Different definitions of convexity in $\ZZ^2$ have been investigated, such as digital line, triangle, line~\cite{KR82}, HV (for Horizontal and Vertical~\cite{BDNP96}), and Q (for Quadrant~\cite{Da01}) convexities. These definitions guarantee that a digital convex set is connected (in terms of the induced grid subgraph), which simplifies several algorithmic problems.

Throughout this paper, however, we use the main and original definition of digital convexity from the geometry of numbers~\cite{Gru93}. A set of lattice points $K \subset \ZZ^d$ is \emph{digital convex} if $\conv(K) \cap \ZZ^d = K$, where $\conv(K)$ denotes the convex hull of $K$.
This definition does not guarantee connectivity of the grid subgraph, but provides several other important mathematical properties, such as being preserved under certain affine transformations. The authors recently showed how to  efficiently test digital convexity  in the plane~\cite{CDG19}. A natural question is to determine the largest digital convex subset.

The \emph{digital potato-peeling problem} is defined as follows and is illustrated in Figure~\ref{fig:pb_peeling}(a,b).

\begin{problem}[Digital potato-peeling] \label{prob:peeling}
Given a set $S \subset \ZZ^2$ of $n$ lattice points described by their coordinates, determine the \textit{largest} set $K \subseteq S$ that is digital convex (i.e., $\conv(K) \cap \ZZ^2 = K$), where largest refers to the area of $\conv(K)$.
\end{problem}

\begin{figure}[tb]
    \centering
        \includegraphics[scale=0.8]{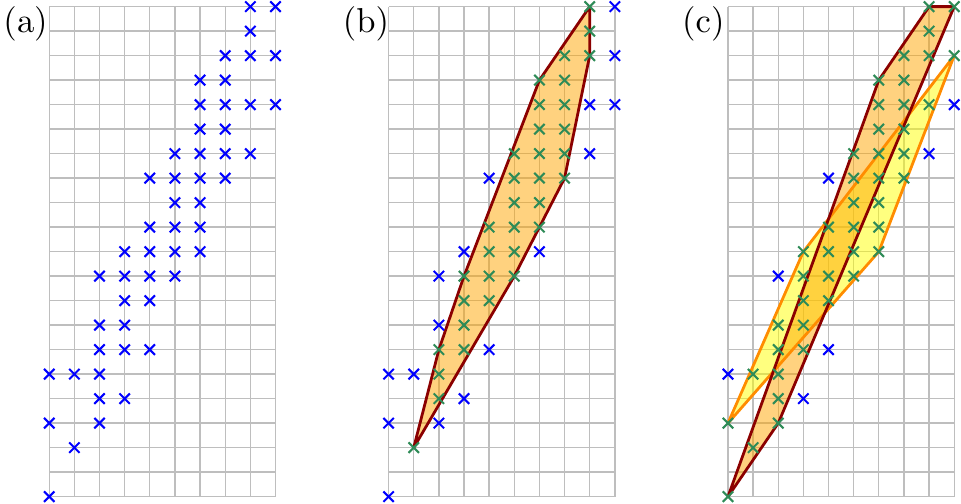}
    \caption{(a)~Input lattice set $S$. (b)~Largest digital convex subset of $S$ (Problem~\ref{prob:peeling}). (c)~Largest union of two digital convex subsets of $S$ (Problem~\ref{prob:2peeling}).}
    \label{fig:pb_2conv}
    \label{fig:pb_peeling}
\end{figure}

Our algorithms can easily be modified to maximize the number of points in $K$ instead of the area of $\conv(K)$. Compared to the continuous version, the digital geometry setting allows us to explicitly represent the whole set of input points, instead of limiting ourselves to polygonal shapes with polygonal holes. Note that the input of the continuous and digital problems is intrinsically different, hence we cannot compare the complexity of the two problems. 
Related continuous problems have been studied, such as the maximum volume of an empty convex body amidst $n$ points~\cite{DHT12}, or the \emph{optimal island problem}~\cite{BCD11,Fis97}, in which we are given two sets $S_p,S_n \subset \RR^2$, and the goal is to determine that largest subset $K \subseteq S_p$ such that $\conv(K) \cap S_n = \emptyset$.

Heuristics for the digital potato-peeling problem have been presented in~\cite{BoS05,ChC05}, but no exact algorithm was known. We solve this open problem by providing the first polynomial-time exact algorithm. 

We also solve the question of covering the largest area with two digital convex subsets. The problem is defined as follows and is illustrated in Figure~\ref{fig:pb_2conv}(a,c). 
\begin{problem}[Digital 2-potato peeling] \label{prob:2peeling}
Given a set $S \subset \ZZ^2$ of $n$ lattice points described by their coordinates, determine the largest set $K = K_1 \cup K_2 \subseteq S$ such that $K_1$ and $K_2$ are both digital convex, where largest refers to the area of $\conv(K_1) \cup \conv(K_2)$.
\end{problem}

A related continuous problem consists of completely covering a polygon by a small number of convex polygons inside of it. O'Rourke showed that covering a polygon with the minimum number of convex polygons is decidable~\cite{Oro82,Oro82-2}, but the problem has been shown to be NP-hard with or without holes~\cite{CuR94,Oro83}.
Shermer~\cite{She93} presents a linear time algorithm for the case of two convex polygons and Belleville~\cite{Bel93} provides a linear time algorithm for three. 
We are not aware of any previous results on finding a fixed (non-unit) number of convex polygons inside a given polygon and maximizing the area covered.

\subsection*{Our results}
We present polynomial time algorithms to solve each of these two problems. In Section~\ref{section:peeling}, we show how to solve the digital potato-peeling problem in $O(n^3 + n^2\log r)$ time, where $r$ is the diameter of the input $S$. 
We adapt an algorithm designed to solve the optimal island problem~\cite{BCD11,Fis97}. This algorithm builds the convex polygon $\conv(K)$ through its triangulation.
We use Pick's theorem~\cite{Pic1899} to test digital convexity for each triangle and the $O(\log r)$ factor in the running time comes from the gcd computation required to apply Pick's theorem.
The algorithm makes use of the following two properties: (i) it is possible to triangulate $K$ using only triangles that share a common bottom-most vertex $v$ and (ii) if the polygons lying on both sides of one such triangle (including the triangle itself) are convex, then the whole polygon is convex. 

These two properties are no longer valid for Problem~\ref{prob:2peeling}, in which the solution $\conv(K_1) \cup \conv(K_2)$ is the union of two convex polygons. Also, since convex shapes are not pseudo-disks (the boundaries may cross an arbitrarily large number of times), separating the input with a constant number of lines is not an option. Instead of property (i), our approach uses the fact that the union of two (intersecting) convex polygons can be triangulated with triangles that share a common vertex $\rho$ (that may not be a vertex of either convex polygon). Since $\rho$ may not have integer coordinates, we can no longer use Pick's theorem, and resort to the formulas from Beck and Robins~\cite{BeR02} or the algorithm from Barvinok~\cite{Bar94} to count the lattice points inside each triangle in $O(\polylog r)$ time.

Furthermore, to circumvent the fact that the solution no longer obeys property (ii), we use a directed acyclic graph (DAG) that encapsulates the orientation of the edges of both convex polygons. For those reasons, the running time of our algorithm for Problem~\ref{prob:2peeling} increases to $O(n^9 + n^6 \polylog r)$. The corresponding algorithm is described in Section~\ref{section:2peeling}.

In Section~\ref{section:continuous}, we show that a solution to the digital version of the problems provides an approximation to the continuous versions, establishing a formal connection between the continuous and digital versions.

Reducing the complexity of our algorithms or extending the result to higher numbers of convex polygons remain intriguing open questions, which are discussed in Section~\ref{section:conclusion}. Throughout, we assume the RAM model of computation, in which elementary operations on the input coordinates take constant time.

%-----------------------------------------------------------------------
\section{Digital Potato Peeling}\label{section:peeling}
%-----------------------------------------------------------------------

In this section, we present an algorithm to solve the digital potato-peeling problem in $O(n^3 + n^2 \log r)$ time, where $n$ is the number of input points and $r$ is the diameter of the point set. 

Fischer~\cite{Fis97} and Bautista et al.~\cite{BCD11} showed how to solve the following related problem in $O(n^3)$ time, where $n$ is the total number of points.
\begin{problem}[Optimal Island] \label{prob:optimal_island}
Given two sets $S_p,S_n \subset \RR^2$, determine the largest subset $K \subseteq S_p$ such that $\conv(K) \cap S_n = \emptyset$.
\end{problem}

The potato peeling problem~\ref{prob:peeling} for an input $S \subset \ZZ^2$ is the optimal island problem with $S_p = S$ and $S_n = \ZZ^2 \setminus S_p$. Restricting the problem to the bounding box of $S_p$, makes $S_n$ finite as $|S_n| = O(r^2)$. The resulting $O(r^6)$ complexity being very large relative to $r$, we do not use this direct approach. 
Nevertheless, the algorithm provides some key insights.

The algorithm consists of two phases. First, a list $\mathcal{T}$ of all \emph{valid} triangles is computed. A triangle $\triangle$ is said to be valid if its vertices are a subset of $S_p$ and if $\triangle \cap S_n = \emptyset$. Second, using $\mathcal{T}$ and the fact that every convex polygon has a fan triangulation in which all the triangles share a common bottom vertex, the solution is computed by appending valid triangles using dynamic programming.
In order to adapt this algorithm to solve the digital potato peeling, it suffices to compute the list of valid triangles $\mathcal{T}$.

\subsection{Valid Triangles}

For any triangle whose vertices are lattice points $\triangle$, and any digital set $S$: $|\triangle \cap S| = |\triangle \cap \ZZ^2|$ implies that $\triangle$ is valid.
As in~\cite{BCD11}, we use the following result of Eppstein et al.~\cite{EDO92} to compute $|\triangle \cap S|$.

\begin{theorem} \label{theorem:triangle_count}
Let $S$ be a set of $n$ points in the plane.
The set $S$ can be preprocessed in $O(n^2)$ time and space in order to, for
any query triangle $\triangle$ with vertices in $S$, compute the number of points $|\triangle \cap S|$ in constant time.
\end{theorem}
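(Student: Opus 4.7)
The plan is to precompute, in $O(n^2)$ time and space, a table indexed by ordered pairs of points of $S$ so that every triangle query is answered by a constant number of lookups. For every ordered pair $(p,q)$ with $p_x \le q_x$ (ties broken by $y$-coordinate), store $T(p,q)$, the number of points of $S$ lying strictly below the line through $p$ and $q$ whose $x$-coordinate is strictly between $p_x$ and $q_x$. In addition, precompute for each pair $p,q$ the number of points of $S$ lying on segment $pq$; this auxiliary table, also of size $O(n^2)$, is only needed to handle boundary corrections.

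A query triangle $\triangle$ with vertices $a,b,c\in S$ is answered as follows. Relabel so that $a_x\le b_x\le c_x$, and observe that $\triangle$ is the symmetric difference of the trapezoid bounded above by segment $ac$ and the trapezoid bounded above by the polyline $a$--$b$--$c$, both over the $x$-interval $[a_x,c_x]$. If $b$ lies above the line through $a$ and $c$, the count of strictly interior points of $\triangle$ is $T(a,c)-T(a,b)-T(b,c)$; if $b$ lies below, it is $T(a,b)+T(b,c)-T(a,c)$. A constant-time correction using the segment table turns this into $|\triangle\cap S|$ by adding the three vertices and the points lying on each of the three edges.

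To compute all $T(p,q)$ in $O(n^2)$ total time, I would use an angular sweep around each point $p\in S$: after sorting the points $q$ with $q_x>p_x$ by the slope of line $pq$, the value $T(p,q)$ can be maintained in $O(1)$ per event as the slope increases, since each event corresponds to a single point changing side of the rotating line. Each angular sort costs $O(n\log n)$ in isolation, giving an overall $O(n^2\log n)$. To reach the stated $O(n^2)$ bound, I would replace the $n$ separate sorts by a single traversal of the arrangement of the $O(n^2)$ dual lines, whose combinatorial complexity is $O(n^2)$ and which can be built by topological sweep in $O(n^2)$ time; each vertex of this arrangement encodes exactly one of the comparisons needed in some angular order.

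The main obstacle is this last step: recovering the $n$ angular orders simultaneously in time linear in the arrangement size, rather than independently, is the non-trivial ingredient that separates the $O(n^2)$ bound from the straightforward $O(n^2\log n)$ one. A secondary, pedantic obstacle is the tie-breaking for degenerate configurations (collinear triples, equal $x$-coordinates, points lying on edges of $\triangle$), which must be handled by a consistent convention so that the $\pm T$ identities above hold as exact equalities.
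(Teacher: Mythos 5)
The paper does not prove this statement at all: it is imported verbatim as a known result of Eppstein et al.~\cite{EDO92}, so there is no internal proof to compare against. Your proposal is essentially a reconstruction of the standard argument behind that citation --- a trapezoid table indexed by pairs, triangle queries answered by a signed combination of three trapezoid counts, and the $n$ radial orders recovered in $O(n^2)$ total time via the dual line arrangement rather than $n$ independent sorts. That architecture is right, and you correctly identify the arrangement sweep as the ingredient that brings $O(n^2\log n)$ down to $O(n^2)$. One small slip: your two cases are swapped. If $b$ lies \emph{above} the line $ac$, the region below the polyline $a$--$b$--$c$ contains the region below $ac$, so the interior count is $T(a,b)+T(b,c)-T(a,c)$; the formula $T(a,c)-T(a,b)-T(b,c)$ belongs to the case where $b$ lies below.

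The step that does not work as written is the claim that $T(p,q)$ itself can be maintained in $O(1)$ per event during an angular sweep around $p$. The obstruction is that $T(p,q)$ depends not only on the line through $p$ and $q$ but also on the vertical slab $(p_x,q_x)$, and when you advance from $q_i$ to the next point $q_{i+1}$ in slope order, the right wall of the slab jumps from $x=(q_i)_x$ to $x=(q_{i+1})_x$; arbitrarily many points can enter or leave the slab at that event, so consecutive values of $T(p,\cdot)$ in angular order are not within $O(1)$ of each other. The standard repair is to decompose $T(p,q)$ as $B(p,q)-W_p(p,q)-W_q(p,q)$, where $B(p,q)$ counts all points of $S$ strictly below the line $pq$ (no slab restriction), $W_p(p,q)$ counts those below the line with $x$-coordinate less than $p_x$, and $W_q(p,q)$ counts those below the line with $x$-coordinate greater than $q_x$. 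The quantity $B(p,q)$ \emph{is} maintainable in $O(1)$ amortized per event under the rotating-line sweep around $p$, and $W_p(p,q)$ (resp.\ $W_q(p,q)$) is the number of points in a wedge with apex $p$ (resp.\ $q$) bounded by a vertical ray and the ray toward the other point, hence a prefix sum in the radial order around that apex. All three tables are therefore computable in $O(n^2)$ total time once the radial orders are available, which closes the gap and yields the claimed bounds.
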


In order to compute $|\triangle \cap \ZZ^2|$, first, for all pairs of points $p_1,p_2 \in S$, we compute the number of lattice points lying on the edge $p_1p_2$ using a gcd computation. This takes $O(n^2 \log r)$ time, where $r$ is the diameter of $S$.
Now, using Pick's formula~\cite{Pic1899} which requires to compute both $area(\triangle)$ and the number of lattice points lying on the edges of $\triangle$, we determine in $O(1)$ time the validity of a triangle.  Since there are $O(n^3)$ triangles with vertices in $S$, the list $\mathcal{T}$ of all valid triangles is computed in $O(n^3 + n^2 \log r)$ time. Using $\mathcal{T}$, the algorithm of Bautista et al.~\cite{BCD11} determines the largest convex polygon formed by triangles in $\mathcal{T}$ in $O(n^3)$ time. Hence, we have the following theorem.

%-----------------------------------------------------------------------

\begin{theorem}
There exists an algorithm to solve Problem~\ref{prob:peeling} (digital potato peeling) in $O(n^3 + n^2 \log r)$ time, where $n$ is the number of input points and $r$ is the diameter of the input.
\end{theorem}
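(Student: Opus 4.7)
The plan is to specialize the optimal-island algorithm of Bautista et al.\ (Problem~\ref{prob:optimal_island}) to the digital setting. Their routine already solves the problem in $O(n^3)$ time once the list $\mathcal{T}$ of \emph{valid} triangles is known, assembling the optimal fan triangulation by dynamic programming. I would call a triangle $\triangle$ with vertices in $S$ valid precisely when every lattice point it contains lies in $S$, that is, when $|\triangle \cap S| = |\triangle \cap \ZZ^2|$. The key observation is that a convex polygon $P$ admitting a fan triangulation by such triangles satisfies $\conv(P) \cap \ZZ^2 \subseteq S$; the lattice set $K = \conv(P) \cap \ZZ^2$ then contains the (lattice) vertices of $P$, is digital convex, is contained in $S$, and satisfies $\area(\conv(K)) = \area(P)$. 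So maximizing area over such $P$ produces an optimal solution to Problem~\ref{prob:peeling}.

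The technical task is to decide validity for each of the $O(n^3)$ candidate triangles in amortized constant time. The quantity $|\triangle \cap S|$ is directly available in $O(1)$ after the $O(n^2)$ preprocessing of Theorem~\ref{theorem:triangle_count}. For $|\triangle \cap \ZZ^2|$, I would appeal to Pick's theorem
\[ |\triangle \cap \ZZ^2| \;=\; \area(\triangle) + \tfrac{1}{2} B(\triangle) + 1, \]
where $\area(\triangle)$ is computed from a $2\times 2$ determinant and $B(\triangle)$ is the number of lattice points on the boundary of $\triangle$. The boundary count splits over the three edges, and a segment $p_i p_j$ with lattice endpoints contributes $\gcd(|x_i - x_j|, |y_i - y_j|) - 1$ strictly interior lattice points; precomputing this gcd for every pair in $S$ costs $O(n^2 \log r)$ time in the RAM model and then allows $B(\triangle)$, and hence validity, to be evaluated in $O(1)$.

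Assembling the pieces, the list $\mathcal{T}$ is built in $O(n^3 + n^2 \log r)$ time, after which the Bautista et al.\ dynamic program runs in $O(n^3)$ time and returns the area-maximizing convex polygon admissible for our validity notion. The main point to verify is that their fan-triangulation argument transfers unchanged: it relies only on the per-triangle validity of each fan piece, which is exactly the condition our definition enforces (a valid fan triangulation produces a $P$ with $\conv(P)\cap\ZZ^2\subseteq S$, and conversely the canonical bottom-vertex fan triangulation of any feasible $\conv(K)$ consists of valid triangles). No further modification of the DP is required, and the total running time is $O(n^3 + n^2 \log r)$, matching the claim.
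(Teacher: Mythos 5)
Your proposal is correct and follows essentially the same route as the paper: reduce to the optimal-island dynamic program, define validity by $|\triangle\cap S|=|\triangle\cap\ZZ^2|$, use the Eppstein et al.\ triangle range counting structure for the first quantity, and Pick's theorem with $O(n^2\log r)$ gcd preprocessing for the second. The only addition is your explicit justification that the fan-triangulation argument transfers, which the paper leaves implicit but which you state correctly.
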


%-----------------------------------------------------------------------
\section{Digital 2-Potato Peeling} \label{section:2peeling}
%-----------------------------------------------------------------------

In this section, we show how to find two digital convex sets $K_1,K_2$, maximizing the area of $\conv(K_1) \cup \conv(K_2)$. We note that the solution described in this section can easily be adapted to solve the optimal 2-islands problem:
\begin{problem}[Optimal 2-Islands] \label{prob:optimal_2island}
Given two sets $S_p,S_n \subset \RR^2$, determine the largest union of subsets $K_1 \cup K_2$ such that $K_1 \cup K_2 \subseteq S_p$, $\conv(K_1) \cap S_n = \emptyset$ and $\conv(K_2) \cap S_n = \emptyset$.
\end{problem}

Consider a solution of the digital 2-potato peeling problem. Either the two convex hulls intersect or they do not (Figure~\ref{fig:opti_example}). We treat those two cases separately and the solution to Problem~\ref{prob:2peeling} is the largest among both. Hence, we consider the two following variations of the 2-potato-peeling problem.

\begin{problem}[Disjoint 2-potato peeling] \label{prob:separated_2}
Given a set $S \subset \ZZ^2$ of $n$ lattice points given by their coordinates, determine the \textit{largest} two digital convex sets $K_1 \cup K_2 \subseteq S$ such that $\conv(K_1) \cap \conv(K_2) = \emptyset$.
\end{problem}

\begin{problem}[Intersecting 2-potato peeling] \label{prob:intersecting_2}
Given a set $S \subset \ZZ^2$ of $n$ lattice points given by their coordinates, determine the \textit{largest} union of two digital convex sets $K_1 \cup K_2 \subseteq S$ such that $\conv(K_1) \cap \conv(K_2) \neq \emptyset$. In this case, largest means the maximum area of $\conv(K_1) \cup \conv(K_2)$.
\end{problem}

\begin{figure}[tb]
    \centering
\includegraphics[scale=0.8]{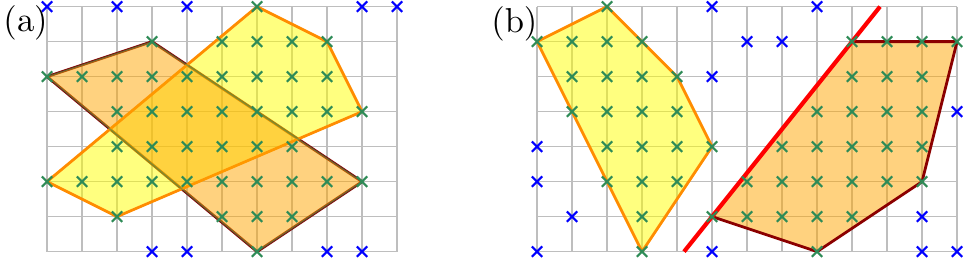}
    \caption{(a)~The two optimal sets intersect. (b)~The two optimal sets are disjoint and there is a supporting separating line.}
    \label{fig:opti_example}
\end{figure}

%-----------------------------------------------------------------------
\subsection{Disjoint Convex Polygons}
\label{section:no_intersect}

Any two disjoint convex shapes can be separated by a straight line. Moreover two convex polygons can be separated by a supporting line of an edge of one of the convex polygons (Figure~\ref{fig:opti_example}(b)).

For each ordered pair of distinct points $p_1,p_2 \in S$, we define two subsets $S_1,S_2$. The set $S_1$ contains the points on the line $p_1,p_2$ or to the 
left of it (according to the direction $p_2-p_1$). The set $S_2$ contains the remaining points.

For each pair of sets $S_1,S_2$, we independently solve Problem~\ref{prob:peeling} for each of $S_1$ and $S_2$. Since there are $O(n^2)$ pairs and each pair takes $O(n^3 + n^2 \log r)$ time, we solve Problem~\ref{prob:separated_2} in $O(n^5 + n^4\log r)$ time.

%-----------------------------------------------------------------------
\subsection{Intersecting Convex Polygons}
\label{section:intersect}

The more interesting case is when the two convex polygons intersect (Problem~\ref{prob:intersecting_2}). Note that it is possible to triangulate the union of two convex polygons that share a common boundary point $\rho$ using a fan triangulation around $\rho$ (Figure~\ref{fig:triangu_of_2_conv}). Hence we consider the following rooted version of the problem.

\begin{problem}[Rooted 2-potato peeling] \label{prob:2_rooted_peeling}
Given a set $S \subset \ZZ^2$ of $n$ lattice points represented by their coordinates and two edges $e_1,e_2 \in S^2$ that cross at a point $\rho$, determine the \textit{largest} union of two digital convex sets $K_1, K_2 \subseteq S$ such that $e_1$ is an edge of $\conv(K_1)$ and $e_2$ is an edge of $\conv(K_2)$.
\end{problem}

\begin{figure}[tb]
    \centering
    \includegraphics[scale=0.8]{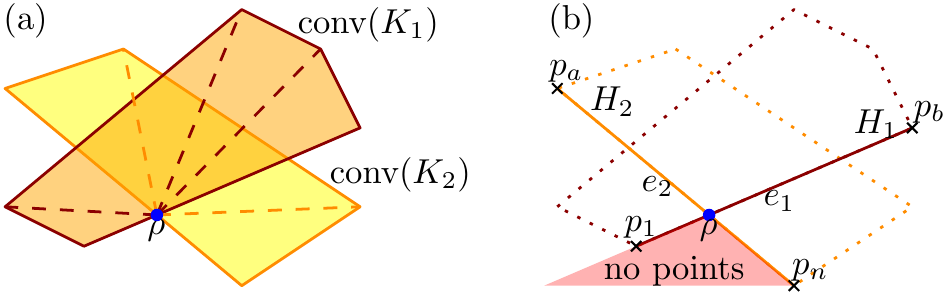}
    \caption{(a)~A fan triangulation of two intersecting convex polygons from a point $\rho$. (b)~Definitions used to solve Problem~\ref{prob:2_rooted_peeling}.}
    \label{fig:triangu_of_2_conv}
    \label{fig:point_denomination_2_peeling}
\end{figure}

Let $\rho$ be the intersection point of $e_1,e_2$.
The strategy of the algorithm to solve Problem~\ref{prob:2_rooted_peeling} is to encode the problem into a DAG $(V,E)$ whose longest directed path corresponds to the desired solution. To avoid confusion, we use the terms \emph{node} and \emph{arc} for the DAG and keep the terms \emph{vertex} and \emph{edge} for the polygons. It is well known that the longest directed path in a DAG $(V,E)$ can be calculated in $O(|V|+|E|)$ time~\cite{SeW11}.

Let $\mathcal{T}$ be the set of valid triangles with two vertices from $S$ and $\rho$ as the remaining vertex.
The nodes $V = \mathcal{T}^2 \cup \{v_0\}$ are ordered pairs of valid triangles and a starting node $v_0$.
% pairs of triangles $v=(\triangle_1, \triangle_2)$ from $\mathcal{T}$ (Fig.~\ref{fig:triangu_of_2_conv}), that is 
The number of nodes is $|V| = O(n^4)$. Before we define the arcs, we give an intuitive idea of our objective.

Each node $(\triangle_1,\triangle_2) \in V$ is such that $\triangle_1$ (resp. $\triangle_2$) is used to build the fan triangulation of $\conv(K_1)$ (resp. $\conv(K_2)$). The arcs will be defined in a way that, at each step as we walk through a path of the DAG, we add one triangle to either $\conv(K_1)$ or to $\conv(K_2)$. The arcs enforce the convexity of both $\conv(K_1)$ and $\conv(K_2)$. Furthermore, we enforce that we always append a triangle to the triangulation that is the least advanced of the two (in clockwise order), unless we have already reached the last triangle of $\conv(K_1)$. This last condition is important to allow us to define the arc lengths in a way that corresponds to the area of the union of the two convex polygons. Figure~\ref{fig:2peeling_example} illustrates the result of following a path on the DAG.

\begin{figure}[tb]
    \centering
    \includegraphics[scale=0.8]{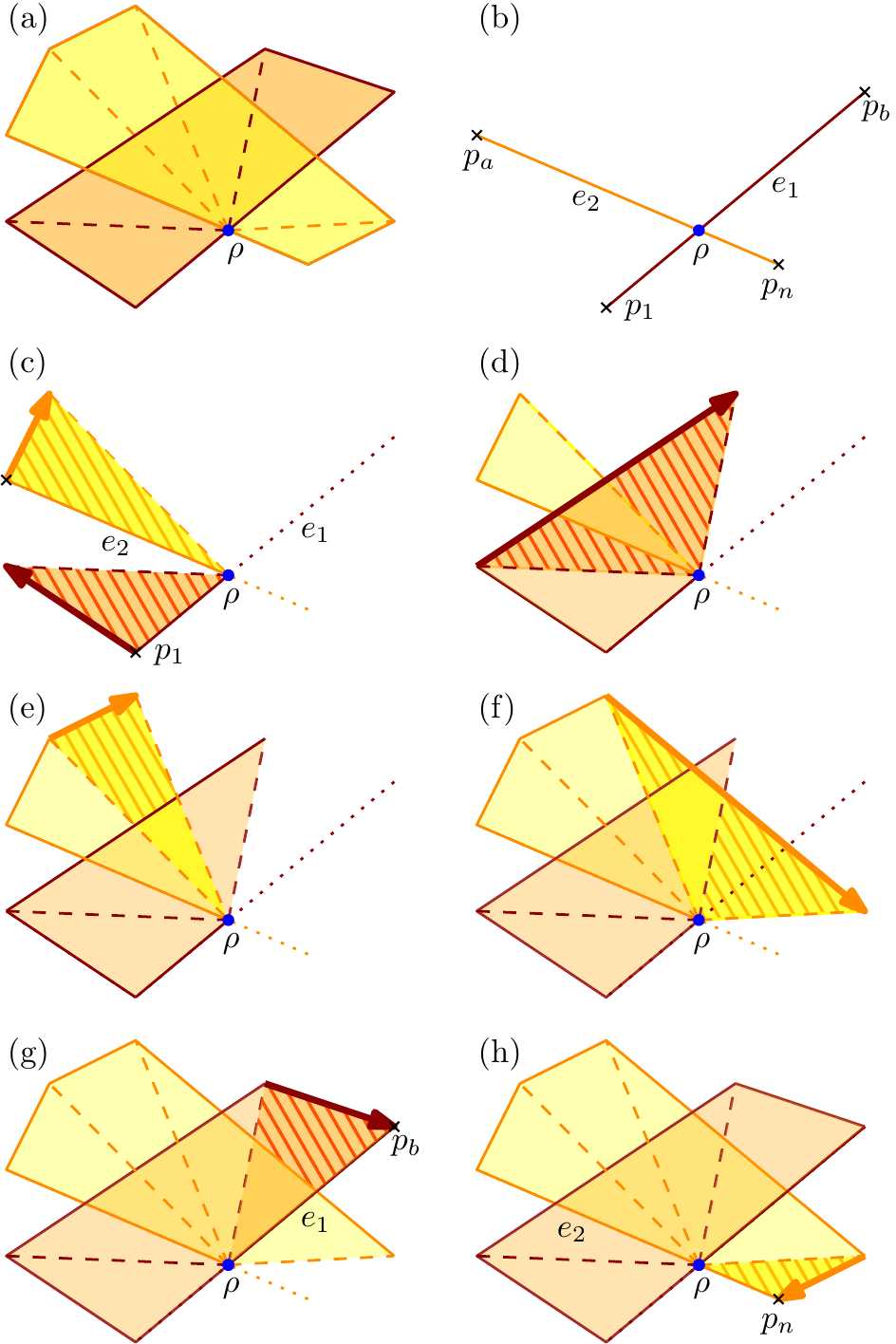}
    \caption{Steps of the algorithm from Section~\ref{section:intersect}. Figure~(a) represents the solution, while Figures~(b) to~(h) represent the triangulation obtained at each node of a path. The newly covered area that is assigned as the length of the corresponding arc is marked. In~(b), we have the initial pair of edges $e_1,e_2$ which corresponds to the starting vertex $v_0$. After following a type-0 arc, a first pair of triangles with vertices $p_1$ and $p_a$ is obtained in~(c). The triangle $\triangle _1$ is brown and triangle $\triangle _2$ yellow. From~(c) to~(d), we follow a type-1 arc. The triangle $\triangle _1$ (less advanced than triangle $\triangle _2$) advances. From~(d) to~(e), we follow a type-2 arc, since triangle $\triangle_2$ is less advanced. From~(e) to~(f) we have again a type-2 arc, and from~(f) to~(g) we have a type-1 arc. In~(g), the triangle $\triangle_1$ has reached the final node $p_b$ and cannot advance anymore. We have only type-2 arcs to follow until $\triangle_2$ reaches $p_n$, at a node in $V_1$.}  
    \label{fig:2peeling_example}
\end{figure}

The edge $e_1$ (respectively, $e_2$) from the problem input defines two halfplanes, one on each side. Let $H_1$ (resp. $H_2$) be the halfplane that contains $K_1$ (resp. $K_2$). We have not yet determined $K_1$ or $K_2$, but all four possibilities of halfplanes may be tried independently.
From now on, we only consider the $O(n)$ points of $S$ lying in the region $H_1 \cup H_2$.
Let $p_1,\ldots,p_n$ be the points of $S$ sorted clockwise around $\rho$, breaking ties arbitrarily. The edge $e_1$ (resp. $e_2$) has $p_1$ (resp. $p_n$) as a vertex. We define the indices $a < b$ such that $e_1 = (p_1, p_b), e_2 = (p_a,p_n)$ (Figure~\ref{fig:point_denomination_2_peeling}).

We are now ready to define the set $E$ of arcs of the DAG. There are three types of arcs. 
The \emph{type-0} arcs start from the initial node $v_0$ to $(\triangle_1,\triangle_2)$ if $p_1$ is a vertex of $\triangle_1$ and $p_a$ a vertex of $\triangle_2$. These two triangles of vertices $\rho,p_1,p_j$ with $j>1$  and $\rho,p_a,p_v$ with $v>a$ are respectively bounded by the edges $e_1$ and $e_2$. They initialize the triangulations of our two polygons $\conv(K_1)$ and $\conv(K_2)$. There are $O(n^2)$ type-0 arcs.

A \emph{type-1} arc corresponds to advancing the triangulation of $\conv(K_1)$, while a \emph{type-2} arc corresponds to advancing the triangulation of $\conv(K_2)$. There are $O(n)$ type-$1,2$ arcs coming out of each node. A \emph{type-1} arc goes from $(\triangle_1,\triangle_2)$ to $(\triangle_3,\triangle_2)$ if:

\begin{itemize}
    \item the quadrilateral $\triangle_1 \cup \triangle_3$ is convex,
    \item $\triangle_1$ has vertices $\rho,p_i,p_j$ with $i<j<b$,
    \item $\triangle_2$ has vertices $\rho,p_u,p_v$ with $a \leq u < v$,
    \item $\triangle_3$ has vertices $\rho,p_j,p_k$ with $j<k \leq b$,
    \item and $j \leq v$.
\end{itemize}

Similarly, there is a type-2 arc from $(\triangle_1,\triangle_2)$ to $(\triangle_1,\triangle_4)$ if:

\begin{itemize}
    \item the quadrilateral $\triangle_2 \cup \triangle_4$ is convex,
    \item $\triangle_1$ has vertices $\rho,p_i,p_j$ with $i<j \leq b$,
    \item $\triangle_2$ has vertices $\rho,p_u,p_v$ with $a \leq u < v$,
    \item $\triangle_4$ has vertices $\rho,p_v,p_w$ with $v<w$,
    \item and either $v \leq j$ or $j = b$.
\end{itemize}

The length of each arc corresponds to the area of the new region covered by appending a new triangle by following the arc.
Therefore, the length of a type-0  arc from $v_0$ to $(\triangle_1,\triangle_2)$ is the area of $\triangle_1 \cup \triangle_2$. 
The length of a type-1  arc from $(\triangle_1,\triangle_2)$ to $(\triangle_3,\triangle_2)$ is defined as the area of $\triangle_3 \setminus \triangle_2$. Similarly, the length of a type-2  arc from $(\triangle_1,\triangle_2)$ to $(\triangle_1,\triangle_4)$ is defined as the area of $\triangle_4 \setminus \triangle_1$.

We define a set of \emph{end} nodes $V_1$ as follows. A node $(\triangle_1,\triangle_2)$ is an end node if $p_b$ is a vertex of $\triangle_1$ and $p_n$ is a vertex of $\triangle_2$.
The construction of the DAG allows us to prove the following lemma.

\begin{lemma}\label{T1}
There is a bijection between the directed paths of the DAG $(V,E)$ (starting from $v_0$ and ending in $V_1$) and the digital convex sets $K_1,K_2 \subset S$ such that $e_1$ is an edge of $\conv(K_1)$ and $e_2$ is an edge of $\conv(K_2)$. Furthermore, the length of each path is equal to the corresponding area of $\conv(K_1) \cup \conv(K_2)$. (We assume that $K_1$ (resp. $K_2$) lie above the supporting line of $e_1$ (resp. $e_2$).)
\end{lemma}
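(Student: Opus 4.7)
The plan is to exhibit inverse maps $\Phi$ from paths to pairs $(K_1, K_2)$ and $\Psi$ from pairs to paths, and then to verify the area identity. The structure of the DAG is cooperative: each arc appends exactly one triangle to one of the two fan triangulations rooted at $\rho$, and the clockwise ordering of $p_1, \ldots, p_n$ around $\rho$ naturally indexes each fan by a strictly increasing sequence of indices.

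For $\Phi$, given a path $v_0, (\triangle_1^{(1)}, \triangle_2^{(1)}), \ldots, (\triangle_1^{(N)}, \triangle_2^{(N)})$ ending in $V_1$, I would extract for each $k \in \{1, 2\}$ the subsequence of distinct triangles appearing in the $k$-th coordinate and set $K_k$ to be the set of their non-$\rho$ vertices. The convex quadrilateral condition on successive fan triangles, together with the strictly increasing index condition, forces each fan to tile a convex polygon with vertex set $K_k$ whose boundary includes $e_k$; since each triangle is valid (so contains no lattice points other than its vertices), the resulting set satisfies $\conv(K_k) \cap \ZZ^2 = K_k$. For $\Psi$, I would fan-triangulate each $\conv(K_k)$ from $\rho$ and interleave the two sequences by the rule ``advance whichever fan currently ends at the smaller index, and always advance $K_2$ once $K_1$ has reached $p_b$''. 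All arc conditions are satisfied by construction, so this yields a valid path, and $\Phi$ and $\Psi$ are mutually inverse by inspection.

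The main obstacle is the area identity. Arc lengths are defined as $\area(\triangle_{\text{new}} \setminus \triangle_{\text{opposite last}})$ rather than $\area(\triangle_{\text{new}} \setminus \text{all previously covered})$; to conclude that the total length is $\area(\conv(K_1) \cup \conv(K_2))$, one must show that at every step the accumulated opposite-side union overlaps $\triangle_{\text{new}}$ in exactly its intersection with the current opposite-side last triangle. I would prove this by induction on path length, maintaining the invariant that whenever a type-1 arc is about to be taken from a state $(\triangle_1, \triangle_2)$ with $\triangle_2$ spanning indices $u < v$, we have $u \leq j$ (and symmetrically for type-2). The invariant holds after a type-0 arc trivially, propagates through type-1 arcs because $j$ only increases while $u$ is unchanged, and propagates through type-2 arcs because the new $u$ equals the old $v$, which the type-2 condition forces to be at most $j$. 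Combined with the fact that fan triangles on a single side have pairwise disjoint angular sectors, this invariant pins the overlap onto the current opposite-side triangle and yields the per-step area identity. The endgame case $j = b$ (once $K_1$ has finished) is handled by the analogous invariant $v \geq$ the final start index on the $K_1$ side, preserved by the same kind of induction.
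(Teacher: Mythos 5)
Your overall structure is the same as the paper's: establish both directions of the correspondence, and derive the area identity from the ``always advance the least advanced fan'' rule, which guarantees that each new triangle overlaps the previously covered region only within the current opposite-side triangle. Your inductive invariant for the area claim is in fact more explicit than the paper's one-sentence justification, and it is essentially the right invariant; your endgame analysis ($j=b$) also matches what is needed.

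Two concrete slips, both repairable. First, you misstate validity and consequently define $K_k$ incorrectly. A valid triangle is \emph{not} one ``containing no lattice points other than its vertices''; it is one all of whose lattice points belong to $S$, and it may contain many such points. If you set $K_k$ to be only the non-$\rho$ vertices of the fan triangles, then $\conv(K_k) \cap \ZZ^2 = K_k$ fails whenever some triangle has interior or edge lattice points, so your map $\Phi$ does not land in the set of digital convex sets and the claimed bijection breaks. You must take $K_k = \conv(\text{vertices}) \cap \ZZ^2$; validity of each fan triangle is then precisely what guarantees $K_k \subseteq S$. Second, your base case is not ``trivial'': after a type-0 arc we have $u = a$ and $j > 1$, and nothing forces $a \leq j$. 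The area identity still holds in that regime, but for a different reason: as long as $u > j$, no type-2 arc can have been taken (it would require $v \leq j < u < v$), so the $K_2$-fan still consists of the single initial triangle $\triangle_2$ and the overlap of the new triangle with the accumulated $K_2$-region is tautologically its overlap with $\triangle_2$. Your invariant only takes over after the first type-2 arc, at which point the new $u$ equals the old $v \leq j$ as you say. With the invariant weakened to ``either $u \leq j$ or the $K_2$-fan is still a single triangle,'' and $K_k$ redefined as above, the argument is sound.
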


\begin{proof}
First we show that the existence of two digital convex sets $K_1,K_2 \subset S$ as in the lemma statement implies the existence of a directed path in the DAG as in the lemma statement.
Let $K_1$ (resp. $K_2$) be two convex sets lying above the supporting line of $e_1$ (resp. $e_2$). Both $\conv(K_1)$ and $\conv(K_2)$ contain $\rho$ as a boundary point and hence can be triangulated from $\rho$. It is easy to see that there is a path corresponding to this triangulation. Next, we show that the converse also holds.

The definition of the arcs is such that advancing through one of them adds a triangle to one of the two polygons while preserving convexity, which ensures that all paths correspond to convex polygons. Furthermore, the starting node ensures that the two convex polygons respectively start from $p_1$ and $p_a$, while the set of ending nodes ensure that the two convex polygons respectively end at $p_b$ and $p_n$. Hence all paths from $v_0$ to $V_1$ correspond to two convex polygons that fit the lemma statement, one from edge $e_1 = p_1,p_b$  and one from edge $e_2 = p_a,p_n$.
The validity test on each triangle ensures that the paths describes digital convex sets.

The definition of the arcs enforces that we only move forward the least advanced triangle, that is the triangle that has the minimum maximum index among its vertices. The only exception is when $\conv(K_1)$ is completed, that is the triangle with vertex $p_b$ has been added to its triangulation. This ensures that the new area covered by a type-$1,2$ arc is simply the set theoretic difference of two triangles (instead of a triangle and an arbitrary convex object).
As the length of the arcs is defined as the area of the difference of the two triangles, the total length of the path is equal to the area of the union of the two convex polygons. Hence each path from $v_0$ to $V_1$ describe two digital convex sets $K_1, K_2 \in S$ such that $e_1$ is an edge of $\conv(K_1)$ and $e_2$ is an edge of $\conv(K_2)$, and the length of each path is equal to the corresponding area of $\conv(K_1) \cup \conv(K_2)$. 
\end{proof}

\begin{theorem}
There exists an algorithm to solve Problem~\ref{prob:2peeling} (digital 2-potato peeling) in $O(n^9 + n^6 \polylog r)$ time, where $n$ is the number of input points and $r$ is the diameter of the input.
\end{theorem}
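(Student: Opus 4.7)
The plan is to combine the analyses of the disjoint case (Section~\ref{section:no_intersect}) and the intersecting case (Section~\ref{section:intersect}), return the larger of the two solutions, and prove that correctness plus a careful counting of DAG vertices, arcs, and validity tests yields the claimed bound. Section~\ref{section:no_intersect} already establishes an $O(n^5 + n^4 \log r)$ bound for Problem~\ref{prob:separated_2}, which is dominated by the intersecting case; thus the analytical work concentrates on Problem~\ref{prob:intersecting_2}.

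For the intersecting case, I would first enumerate the rooted instances (Problem~\ref{prob:2_rooted_peeling}). Since $e_1, e_2 \in S^2$, there are $O(n^4)$ ordered pairs of edges to try, each fixing a candidate intersection point $\rho$ and a choice of halfplanes $H_1, H_2$. For each such instance, I would carry out three subtasks: (i) preprocess the set of valid triangles $\mathcal{T}$ with $\rho$ as a vertex; (ii) build the DAG $(V,E)$ defined in Section~\ref{section:intersect}; (iii) compute a longest path from $v_0$ to $V_1$. Correctness per instance is immediate from Lemma~\ref{T1}, and overall correctness follows because the optimum is achieved either in the disjoint case or in at least one rooted instance (taking $e_1, e_2$ to be any edges of $\conv(K_1), \conv(K_2)$ that cross).

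For the complexity of a single rooted instance, the triangles in $\mathcal{T}$ have $\rho$ as one vertex and two points of $S$ as the other two, so $|\mathcal{T}| = O(n^2)$. Since $\rho$ need not be a lattice point, I would replace Pick's theorem by Barvinok's algorithm~\cite{Bar94} (or the Beck--Robins formulas~\cite{BeR02}) to count $|\triangle \cap \ZZ^2|$, costing $O(\polylog r)$ per triangle and hence $O(n^2 \polylog r)$ for the entire list. The point-in-triangle counts $|\triangle \cap S|$ can be obtained in $O(1)$ per triangle after $O(n^2)$ preprocessing via Theorem~\ref{theorem:triangle_count}, so a triangle is declared valid precisely when the two counts coincide. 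The DAG has $|V| = O(n^4)$ nodes, and each node has $O(n)$ outgoing arcs (one per new point index), giving $|E| = O(n^5)$. Each arc test (convexity of the quadrilateral, index ordering, and area computation for arc length) runs in $O(1)$ once $\mathcal{T}$ and the triangle areas are tabulated, and the longest path is computed in $O(|V|+|E|) = O(n^5)$ time. Summing yields $O(n^5 + n^2 \polylog r)$ per instance.

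Multiplying by the $O(n^4)$ choices of $(e_1, e_2)$ gives $O(n^9 + n^6 \polylog r)$ for the intersecting case, which absorbs the disjoint bound and yields the theorem. The main obstacle in this plan is the lattice-point count with a non-lattice apex $\rho$: I expect this is where one must be most careful, both to justify invoking Barvinok's algorithm in the RAM model with the claimed $\polylog r$ cost and to ensure that the validity test based on $|\triangle \cap S| = |\triangle \cap \ZZ^2|$ correctly certifies digital convexity of the resulting union along each DAG path. Every other step is either a direct counting argument or an immediate consequence of Lemma~\ref{T1}.
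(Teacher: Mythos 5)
Your proposal follows the paper's proof essentially step for step: the same case split between disjoint and intersecting solutions, the same $O(n^4)$ enumeration of rooted instances, the same $O(n^4)$-node/$O(n^5)$-arc DAG analysis with correctness delegated to Lemma~\ref{T1}, and the same use of Barvinok or Beck--Robins to count lattice points in triangles with the non-lattice apex $\rho$. The one slip is your appeal to Theorem~\ref{theorem:triangle_count} for computing $|\triangle \cap S|$ in $O(1)$ time: that theorem requires the query triangle's vertices to lie in $S$, whereas here one vertex is $\rho \notin S$, so the paper instead uses a general triangle range-counting structure with $O(\log n)$ query time after $O(n^2)$ preprocessing --- a difference that does not affect the final bound, since the diameter of $n$ lattice points is $\Omega(\sqrt{n})$ and the $O(\polylog r)$ term dominates.
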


\begin{proof}
As explained in Section~\ref{section:no_intersect}, solving the disjoint case (Problem~\ref{prob:separated_2}) takes $O(n^5 + n^4 \log r)$ time. Next, we show how to solve the rooted intersecting case (Problem~\ref{prob:2_rooted_peeling}) in $O(n^5 + n^2 \polylog r)$ time, proving the theorem.

Assume without loss of generality that $K_1,K_2$ are respectively above the supporting lines of $e_1,e_2$ (all four possibilities may be tried independently).

Our algorithm starts by computing the DAG $(V,E)$ with $O(n^4)$ nodes, each representing a pair of triangles. Since each node has at most $O(n)$ incoming arcs, the number of arcs is $O(n^5)$. Hence the longest path can be found in $O(n^5)$ time.

To build the set of nodes $V$, we need to test the validity of $O(n^2)$ triangles. Since $\rho$ may not be a lattice point, Pick's theorem~\cite{Pic1899} cannot be used. Still, $\rho$ is a rational point with denominators bounded by $O(r^2)$. Hence, we can use either the formulas from Beck and Robins~\cite{BeR02} or the algorithm from Barvinok~\cite{Bar94} to calculate the number of lattice points $|T \cap \ZZ^2|$ inside each triangle $T$ in $O(\polylog r)$ time.
As in Section~\ref{section:peeling}, we compute $|T \cap S|$ using a triangle range counting query, which takes $O(\log n)$ time after preprocessing $S$ in $O(n^2)$ time~\cite{CSW92}.
The triangle is valid if and only if $|T \cap \ZZ^2| = |T \cap S|$.
The two steps to test the validity of a triangle take $O(\polylog r)$ and $O(\log n)$ time. Since the diameter $r$ of $n$ lattice points is $\Omega(\sqrt{n})$, the dominating term is $O(\polylog r)$.
Hence, we test the validity of each triangle in $O(\polylog r)$ time, which gives a total time of $O(n^2 \polylog r)$ to build the list of valid triangles required to build $V$.

Consequently, we solve Problem~\ref{prob:2_rooted_peeling} in $O(n^5 + n^2 \polylog r)$ time. To obtain a solution to Problem~\ref{prob:2peeling}, we note that there are $O(n^2)$ candidates for the edge $e_1$, as well as for the edge $e_2$. Testing all $O(n^4)$ possible edges $e_1,e_2$, we achieve the claimed running time of $O(n^9 + n^6 \polylog r)$ time. 
\end{proof}

%-----------------------------------------------------------------------
\section{From Digital to Continuous} \label{section:continuous}
%-----------------------------------------------------------------------

In this section, we show that the exact algorithms for the digital potato-peeling problem and the digital 2-potato-peeling problem can be used to compute an approximation of the respective continuous problems with an arbitrarily small approximation error. For simplicity, we focus on the potato-peeling problem, but the 2-potato-peeling case is analogous.
We note that the reduction presented here does not lead to efficient approximation algorithms and is presented only to formally connect the continuous and digital versions of the problem.

\begin{problem}[Continuous potato-peeling] \label{prob:peeling_cont}
Given a polygon $P$ (that may have holes) of $n$ vertices, determine the \textit{largest} convex polygon $K \subseteq P$, where largest refers to the area of $K$.
\end{problem}

We start with some definitions. Let $K_C$ be the polygon of the optimal solution to the continuous problem above and $A_C$ be the area of $K_C$. 
Given an approximation parameter $\eps > 0$, we show how to obtain a set of lattice points $S \subseteq P$ such that the area $A_D$ of the convex hull of the solution $K_D$ of Problem~\ref{prob:peeling} with input $S$ satisfies $|A_C - A_D| = O(r \eps)$. In this section, we use lattice points that are not integers, but points with coordinates that are multiples of $\eps$. Let $\Lambda_\eps$ denote the set of all points with coordinates that are multiple of $\eps$. Of course, a uniform scaling maps $\Lambda_\eps$  to the integer lattice used in the remainder of the paper, and hence the integer lattice algorithms also apply to $\Lambda_\eps$. 

For a polygon $P$, the \emph{erosion} of $P$, denoted $P^-$ is the subset of $P$ formed by points within $L_\infty$ distance at least $2 \eps$ of all points outside $P$ (Figure~\ref{fig:p_and_pminus}(a)). Let $A^-_C$ be the area of the optimal solution to the continuous potato-peeling problem with input $P^-$.

We only give here the main directions of the proof. A more detailed version of the proof can be found in the appendix~\ref{appendix}.
First, by bounding the number of lattice cells that a convex curve of a given length can cross, we bound by $O(r\eps)$ the area difference between any convex polygon and the convex hull of the lattice points inside it.
Then, we use an erosion of $2\eps$ in order to smooth the input and avoid difficulties related to comb like input polygons. We bound the area difference by $O(r\eps)$ between the solution of problem~\ref{prob:peeling} for any polygon and the solution for the erosion of this polygon.
Finally, despite the digital solution being potentially outside the input polygon, it can be shown that the area lying outside the input polygon is bounded by $O(r\eps)$ which gives us the following theorem:

\begin{theorem}
Let $A_C$ be the area of the solution $K_C$ of Problem~\ref{prob:peeling_cont} with input polygon $P$ of diameter $r$. Let $\eps>0$ be a parameter and $S = \Lambda_\eps \cap P^-$, where $P^-$ is the erosion of $P$ by $2\eps$ and $\Lambda_\eps$ is the lattice of size $\eps$. The area $A_D$ of the convex hull of the solution $K_D$ of Problem~\ref{prob:peeling} with input $S$ satisfies
$|A_C - A_D| = O(r \eps)$.
\end{theorem}

The polygon $\conv(K_D)$ in the previous theorem may partially extend outside $P$. Nevertheless, the solution $K_D$ of Problem~\ref{prob:peeling} can be used to obtain a convex polygon $K \subseteq P$ which has an area $A$ satisfying $|A_C - A_D| = O(r \eps)$.

%-----------------------------------------------------------------------
\section{Conclusion and Open Problems}
\label{section:conclusion}
%-----------------------------------------------------------------------

The (continuous) potato peeling problem is a very peculiar problem in computational geometry. The fastest algorithms known have running times that are polynomials of substantially high degree. Also, we are not aware of any algorithms (or difficulty results) for the natural extensions to higher dimensions (even 3d) or to a fixed number of convex bodies.

In this paper, we focused on a digital version of the problem. Many problems in the intersection of digital, convex, and computational geometry remain open. Our study falls in the following framework of problems, all of which receive as input a set of $n$ lattice points $S\subset \ZZ^d$ for constant $d$ and are based on a fixed parameter $k \geq 1$.

\begin{enumerate}
    \item Is $S$ the union of at most $k$ digital convex sets?
    \item What is the smallest superset of $S$ that is the union of at most $k$ digital convex sets?
    \item What is the largest subset of $S$ that is the union of at most $k$ digital convex sets?
\end{enumerate}

In~\cite{CDG19}, the authors considered the first problem for $k=1$, presenting polynomial time solutions (which may still leave room for major improvements for $d>3$). We are not aware of any previous solutions for $k>1$. In contrast, the continuous version of the problem is well studied. The case of $k=1$ can be solved easily by a convex hull computation or by linear programming. Polynomial algorithms are known for $d=2$ and $k\leq3$~\cite{Bel93,She93}, as well as for $d=3$ and $k \leq 2$~\cite{Bel95b}. The problem is already NP-complete for $d=k=3$~\cite{Bel95b}. Hence, the continuous version remains open only for $d=2$ and fixed $k > 3$.

It is easy to obtain polynomial time algorithms for the second problem when $k=1$, since the solution consists of all points in the convex hull of $S$. The continuous version for $d=k=2$ can be solved in $O(n^4 \log n)$ time~\cite{BCES18}. Also, the orthogonal version of the problem is well studied (see for example~\cite{EGH16}). We know of no results for the digital version.

In this paper, we considered the digital version of the third problem for $d=2$ and $k = 1,2$, presenting algorithms with respective running times of $O(n^3 + n^2 \log r)$ and $O(n^9 + n^6 \polylog r)$, where $r$ is the diameter of $S$. Since the first problem trivially reduces to the third problem, we also solved the first problem for $k=d=2$ in $O(n^9 + n^6 \polylog r)$ time. It is surprising that we are not aware of any faster algorithm for the first problem in this particular case. 

The third problem for $d > 2$ or $k>2$ remains open. The DAG approach that we used for $d=2$ is unlikely to generalize to higher dimensions, since there is no longer a single order by which to transverse the boundary of a convex polytope. Surprisingly, even the continuous version seems to be unresolved for $d > 2$ or $k \geq 2$.

%-----------------------------------------------------------------------
% Bibliography
%-----------------------------------------------------------------------
\bibliographystyle{plainurl}
\bibliography{sample}

\begin{thebibliography}{10}

\bibitem{AVLS11}
Boris Aronov, Marc Van~Kreveld, Maarten L{\"o}ffler, and Rodrigo~I. Silveira.
\newblock Peeling meshed potatoes.
\newblock {\em Algorithmica}, 60(2):349--367, 2011.

\bibitem{AHM07}
Charles Audet, Pierre Hansen, and Fr{\'e}d{\'e}ric Messine.
\newblock Extremal problems for convex polygons.
\newblock {\em Journal of Global Optimization}, 38(2):163--179, 2007.

\bibitem{BCES18}
Sang~Won Bae, Hwan-Gue Cho, William Evans, Noushin Saeedi, and Chan-Su Shin.
\newblock Covering points with convex sets of minimum size.
\newblock {\em Theoretical Computer Science}, 718:14--23, 2018.

\bibitem{BDNP96}
Elena Barcucci, Alberto~Del Lungo, Maurice Nivat, and Renzo Pinzani.
\newblock Reconstructing convex polyominoes from horizontal and vertical
  projections.
\newblock {\em Theoretical Computer Science}, 155(2):321--347, 1996.

\bibitem{Bar94}
Alexander~I. Barvinok.
\newblock A polynomial time algorithm for counting integral points in polyhedra
  when the dimension is fixed.
\newblock {\em Mathematics of Operations Research}, 19(4):769--779, 1994.

\bibitem{BCD11}
Crevel Bautista-Santiago, Jos{\'e}~Miguel D{\'\i}az-B{\'a}{\~n}ez, Dolores
  Lara, Pablo P{\'e}rez-Lantero, Jorge Urrutia, and Inmaculada Ventura.
\newblock Computing optimal islands.
\newblock {\em Operations Research Letters}, 39(4):246--251, 2011.

\bibitem{BeR02}
Matthias Beck and Sinai Robins.
\newblock Explicit and efficient formulas for the lattice point count in
  rational polygons using {D}edekind-{R}ademacher sums.
\newblock {\em Discrete {\&} Computational Geometry}, 27(4):443--459, Jan 2002.

\bibitem{Bel93}
Patrice Belleville.
\newblock On restricted boundary covers and convex three-covers.
\newblock In {\em 5th Canadian Conference on Computational Geometry (CCCG)},
  pages 467--472, 1993.

\bibitem{Bel95b}
Patrice Belleville.
\newblock Convex covers in higher dimensions.
\newblock In {\em 7th Canadian Conference on Computional Geometry (CCCG)},
  pages 145--150, 1995.

\bibitem{BoS05}
Gunilla Borgefors and Robin Strand.
\newblock An approximation of the maximal inscribed convex set of a digital
  object.
\newblock In {\em 13th International Conference on Image Analysis and
  Processing (ICIAP)}, pages 438--445, 2005.

\bibitem{CCKS17}
Sergio Cabello, Josef Cibulka, Jan Kyncl, Maria Saumell, and Pavel Valtr.
\newblock Peeling potatoes near-optimally in near-linear time.
\newblock {\em SIAM Journal on Computing}, 46(5):1574--1602, 2017.

\bibitem{ChY86}
Jyun~S. Chang and Chee~K. Yap.
\newblock A polynomial solution for the potato-peeling problem.
\newblock {\em Discrete {\&} Computational Geometry}, 1(2):155--182, 1986.

\bibitem{ChC05}
Jean-Marc Chassery and David Coeurjolly.
\newblock {Optimal shape and inclusion: open problems}.
\newblock In {\em {Mathematical Morphology: 40 Years On, International
  Symposium on Mathematical Morphology}}, Computational Imaging and Vision.
  {Springer Verlag}, 2005.

\bibitem{CSW92}
Bernard Chazelle, Micha Sharir, and Emo Welzl.
\newblock Quasi-optimal upper bounds for simplex range searching and new zone
  theorems.
\newblock {\em Algorithmica}, 8(1-6):407--429, 1992.

\bibitem{CDG19}
Lo\"ic Crombez, Guilherme~D. da~Fonseca, and Yan G\'erard.
\newblock Efficient algorithms to test digital convexity.
\newblock In {\em 21st International Conference on Discrete Geometry for
  Computer Imagery (DGCI)}, 2019.
\newblock URL: \url{http://fc.isima.fr/~fonseca/digitalconvexity.pdf}.

\bibitem{CuR94}
Joseph Culberson and Robert~A. Reckhow.
\newblock Covering polygons is hard.
\newblock {\em Journal of Algorithms}, pages 17:2--44, 1994.

\bibitem{Da01}
Alain Daurat.
\newblock Salient points of q-convex sets.
\newblock {\em International Journal of Pattern Recognition and Artificial
  Intelligence}, 15(7):1023--1030, 2001.

\bibitem{DHT12}
Adrian Dumitrescu, Sariel Har-Peled, and Csaba~D. T{\'o}th.
\newblock Minimum convex partitions and maximum empty polytopes.
\newblock In Fedor~V. Fomin and Petteri Kaski, editors, {\em Algorithm Theory
  -- SWAT 2012}, pages 213--224, Berlin, Heidelberg, 2012. Springer Berlin
  Heidelberg.

\bibitem{DBBB11}
Mousumi Dutt, Arindam Biswas, Partha Bhowmick, and Bhargab~B. Bhattacharya.
\newblock On finding an orthogonal convex skull of a digital object.
\newblock {\em International Journal of Imaging Systems and Technology},
  21(1):14--27, 2011.

\bibitem{EDO92}
David Eppstein, Mark Overmars, G{\"u}nter Rote, and Gerhard Woeginger.
\newblock Finding minimum area k-gons.
\newblock {\em Discrete \& Computational Geometry}, 7(1):45--58, 1992.

\bibitem{EGH16}
Cem Evrendilek, Burkay Gen{\c{c}}, and Brahim Hnich.
\newblock Covering points with minimum/maximum area orthogonally convex
  polygons.
\newblock {\em Computational Geometry}, 54:32--44, 2016.

\bibitem{Fis97}
Paul Fischer.
\newblock Sequential and parallel algorithms for finding a maximum convex
  polygon.
\newblock {\em Computational Geometry}, 7:187--200, 02 1997.

\bibitem{GeVF16}
Yan G{\'{e}}rard, Antoine Vacavant, and Jean{-}Marie Favreau.
\newblock Tight bounds in the quadtree complexity theorem and the maximal
  number of pixels crossed by a curve of given length.
\newblock {\em Theoretical Computer Science}, 624:41--55, 2016.

\bibitem{Goo81}
Jacob~E. Goodman.
\newblock On the largest convex polygon contained in a non-convex n-gon, or how
  to peel a potato.
\newblock {\em Geometriae Dedicata}, 11(1):99--106, 1981.

\bibitem{Gru93}
Peter~M. Gruber.
\newblock Geometry of numbers.
\newblock In {\em Handbook of Convex Geometry, Part B}, pages 739--763.
  Elsevier, 1993.

\bibitem{HKKM06}
Olaf Hall-Holt, Matthew~J. Katz, Piyush Kumar, Joseph~S.B. Mitchell, and Arik
  Sityon.
\newblock Finding large sticks and potatoes in polygons.
\newblock In {\em 17th annual ACM-SIAM symposium on Discrete algorithm (SODA)},
  pages 474--483, 2006.

\bibitem{KR82}
Chul~E. Kim and Azriel Rosenfeld.
\newblock Digital straight lines and convexity of digital regions.
\newblock {\em IEEE Transactions on Pattern Analysis and Machine Intelligence},
  4(2):149--153, 1982.

\bibitem{KlR04}
Reinhard Klette and Azriel Rosenfeld.
\newblock {\em Digital geometry: Geometric methods for digital picture
  analysis}.
\newblock Elsevier, 2004.

\bibitem{Oro82}
Joseph O'Rourke.
\newblock The complexity of computing minimum convex covers for polygons.
\newblock In {\em 20th Allerton Conference on Communication, Control, and
  Computing}, pages 75--84, 1982.

\bibitem{Oro82-2}
Joseph O'Rourke.
\newblock The decidability of covering by convex polygons.
\newblock Technical Report JHU-EECS 82-4, Johns Hopking University, 1982.

\bibitem{Oro83}
Joseph O'Rourke.
\newblock Some {NP}-hard polygon decomposition problems.
\newblock {\em IEEE Transactions on Information Theory, IT-30}, pages 181--190,
  1983.

\bibitem{Pic1899}
Georg Pick.
\newblock Geometrisches zur zahlenlehre.
\newblock {\em Sitzungsberichte des Deutschen
  Naturwissenschaftlich-Medicinischen Vereines für Böhmen "Lotos" in Prag.},
  v.47-48 1899-1900, 1899.

\bibitem{SeW11}
Robert Sedgewick and Kevin Wayne.
\newblock {\em Algorithms}.
\newblock Addison-Wesley Professional, 2011.

\bibitem{She93}
Thomas~C. Shermer.
\newblock On recognizing unions of two convex polygons and related problems.
\newblock {\em Pattern Recognition Letters , 14(9)}, pages 737--745, 1993.

\bibitem{Woo86}
Tony~C. Woo.
\newblock The convex skull problem.
\newblock Technical report, Department of Industrial and Operations
  Engineering, University of Michigan, 1986.

\bibitem{WoY88}
Derick Wood and Chee~K. Yap.
\newblock The orthogonal convex skull problem.
\newblock {\em Discrete \& Computational Geometry}, 3(4):349--365, 1988.

\end{thebibliography}

\newpage
\section*{Appendix}\label{appendix}
%-----------------------------------------------------------------------
\section{From Digital to Continuous}
%-----------------------------------------------------------------------

The \emph{width} of $P$ is the minimum distance between two parallel lines $\ell_1,\ell_2$ such that $P$ is between $\ell_1$ and $\ell_2$.

The following lemma that bounds the area difference between a convex polygon and the convex hull of its intersection with a lattice set will be useful to our proof.

\begin{lemma}\label{lemma:dig_to_cont}
Let $C$ be a convex polygon of diameter $r$. The convex hull $H = \conv(C \cap \Lambda_\eps)$ satisfies
\[\area(C) \leq \area(H) + 6 \sqrt{2} \pi r \eps + 16\eps^2.\]
\end{lemma}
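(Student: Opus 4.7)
The plan is to identify the lost region $C \setminus H$ with a union of lattice cells hugging $\partial C$, and then estimate the area of that union.

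First I would establish a ``full cell'' observation: for any $p \in C \setminus H$, the $\eps \times \eps$ cell $K_p$ of $\Lambda_\eps$ containing $p$ must meet $\partial C$. Otherwise $K_p \subseteq C$, so the four corners of $K_p$ lie in $C \cap \Lambda_\eps$, hence in $H$; since $H$ is convex and contains all four corners of the square $K_p$, we would get $K_p \subseteq H$, contradicting $p \notin H$. Therefore $C \setminus H \subseteq U$, where $U$ denotes the union of all cells of $\Lambda_\eps$ that intersect $\partial C$.

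Next I would bound $\area(U)$ by a tube argument. Each cell has Euclidean diameter $\sqrt{2}\eps$, so $U$ lies inside the $\sqrt{2}\eps$-neighborhood $N$ of the convex curve $\partial C$. A standard Steiner-type estimate for a tube around a convex curve of perimeter $L$ gives $\area(N) \le 2\sqrt{2}\, L\eps + 2\pi\eps^2$, and the Cauchy formula $L = \int_0^{\pi} w(\theta)\,d\theta$ together with $w(\theta) \le r$ yields $L \le \pi r$. Substituting, $\area(C) - \area(H) \le 2\sqrt{2}\pi r \eps + 2\pi\eps^2$, which fits comfortably inside the claimed bound $6\sqrt{2}\pi r \eps + 16\eps^2$.

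The main obstacle I foresee is pinning down the specific constants $6\sqrt{2}\pi$ and $16$ in the statement. Both the tube estimate above and a direct combinatorial count of cells crossed by $\partial C$ (decompose the curve into four arcs that are monotone in both $x$ and $y$, then bound the horizontal and vertical lattice-line crossings by $2r/\eps$ each) actually produce sharper coefficients than claimed, so I expect the paper's argument to favor a looser but simpler cell-covering step, possibly combining the $\sqrt{2}$ from cell diameters with the $\pi$ from the perimeter bound in a slightly wasteful way. The $16\eps^2$ additive term I would treat as a safety margin for corner cells of $U$ and for the degenerate regime where $r$ is not much larger than $\eps$ and the $r\eps$ term no longer dominates.
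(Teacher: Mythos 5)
Your proof is correct and follows essentially the same route as the paper: both arguments rest on the observation that any cell fully contained in $C$ has its four corners in $C \cap \Lambda_\eps$ and hence lies in $H$, so that $C \setminus H$ is covered by the cells meeting $\partial C$, and both then invoke the perimeter bound $\pi r$ for a convex body of diameter $r$. The only difference is the sub-step bounding the total area of the boundary cells --- the paper cites a combinatorial count of at most $3p/(\eps\sqrt{2}) + 4$ cells crossed by a curve of length $p$ and then (generously, with an extra factor of $4$) converts this to area so as to land exactly on $6\sqrt{2}\pi r\eps + 16\eps^2$, whereas your Steiner-tube estimate gives the sharper $2\sqrt{2}\pi r\eps + 2\pi\eps^2$, which implies the stated inequality.
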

\begin{proof}
The lattice $\Lambda_\eps$ induces a grid with vertex set $\Lambda_\eps$ and square cells of side length $\eps$. Let $X^-$ be the set of grid cells that are completely contained in $C$ and $X^\partial$ be the set of cells that are partially contained in $C$. All cells in $X^\partial$ intersect the boundary $\partial C$ of $C$.

Since the perimeter of a convex shape is at most $\pi$ times its diameter~\cite{AHM07}, the perimeter of $\partial C$ is at most $\pi r$. Since a curve of perimeter $p$ intersects at most $3 p / \eps \sqrt{2}  + 4$ grid cells of side length $\eps$ \cite{GeVF16}, we have $|X^\partial| \leq 3 \pi r / \eps \sqrt{2} + 4$.

All cells in $X^-$ are contained in $H$ and $C$ is covered by $X^- \cup X^\partial$. Therefore, the area of $C\setminus H$ is at most the area in $X^\partial$, which is 
\[\eps^2 |X^\partial| \leq 4\eps^2 \cdot \left(\frac{3}{\eps\sqrt{2}} \pi r + 4\right) = 6 \sqrt{2} \pi r \eps + 16\eps^2,\]
proving the lemma. 
\end{proof}

The following lemma bounds the area difference between the optimal solutions of the continuous potato peeling problem with inputs $P$ and $P^-$.

\begin{lemma}\label{lemma:dig_to_cont2}
Let $P$ be a polygon of diameter $r$ and $P^-$ be the erosion of $P$. Let $C$ (resp. $C'$) denote the largest convex polygon inside $P$ (resp. $P^-$). We have the following inequality:
\[\area(C) \leq \area(C') + 2 \sqrt{2} \pi r \eps.\]
\end{lemma}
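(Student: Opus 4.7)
The plan is to prove the bound by an erosion argument: we exhibit a convex subset of $P^-$ obtained from $C$ by inward-shrinking, then bound the area lost during the shrinking. Let $C^- = C \ominus B_\infty(2\eps)$ denote the $L_\infty$-erosion of $C$ by $2\eps$, where $B_\infty(\delta) = [-\delta,\delta]^2$.

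First I would observe that $C^- \subseteq P^-$ and that $C^-$ is convex. Since erosion is monotone in its first argument, $C \subseteq P$ gives $C^- = C \ominus B_\infty(2\eps) \subseteq P \ominus B_\infty(2\eps) = P^-$, matching the definition of $P^-$ given in the paper. Convexity follows because $C^-$ is the intersection over $b \in B_\infty(2\eps)$ of the translates $C - b$, each of which is convex. By optimality of $C'$ as the largest convex polygon in $P^-$, this immediately yields $\area(C^-) \leq \area(C')$.

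The remaining task is to bound $\area(C) - \area(C^-)$ by $2\sqrt{2}\pi r\eps$. I would compare the $L_\infty$ erosion to a Euclidean one: since $\|x\|_2 \leq \sqrt{2}\,\|x\|_\infty$, we have $B_\infty(2\eps) \subseteq B_2(2\sqrt{2}\eps)$, and therefore $C^- \supseteq C \ominus B_2(2\sqrt{2}\eps)$. It then suffices to show the inner Steiner-type inequality
\[\area(C) - \area(C \ominus B_2(\delta)) \leq L(C)\cdot \delta,\]
where $L(C)$ is the perimeter of $C$. Applying this with $\delta = 2\sqrt{2}\eps$ and using the perimeter bound $L(C) \leq \pi r$ already cited in Lemma~\ref{lemma:dig_to_cont} yields the desired $2\sqrt{2}\pi r \eps$.

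The main obstacle is justifying the inner Steiner inequality cleanly. For a convex polygon $C$ with edges of lengths $\ell_1, \dots, \ell_m$, I would argue geometrically: each edge of $C \ominus B_2(\delta)$ is the inward parallel translate by $\delta$ of the corresponding edge of $C$, and the region $C \setminus (C \ominus B_2(\delta))$ decomposes, with no overlap, into one trapezoid per edge, with parallel sides $\ell_i \geq \ell_i'$ and height $\delta$. The trapezoid area is $\tfrac{1}{2}(\ell_i + \ell_i')\delta \leq \ell_i \delta$, and summing over edges gives $L(C)\cdot \delta$. (If $C \ominus B_2(2\sqrt{2}\eps)$ is empty the bound is immediate, because the $L_\infty$-width of $C$ is then at most $4\eps$, forcing $\area(C) \leq 4r\eps \leq 2\sqrt{2}\pi r\eps$.) Combining the three steps gives $\area(C) \leq \area(C^-) + 2\sqrt{2}\pi r\eps \leq \area(C') + 2\sqrt{2}\pi r\eps$, as required.
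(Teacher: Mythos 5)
Your proposal is correct and follows essentially the same route as the paper: erode $C$ by $2\eps$ in $L_\infty$, observe that the result is convex and contained in $P^-$ (hence no larger than $C'$), and bound the area lost to erosion by the perimeter bound $\pi r$ times the Euclidean width $2\sqrt{2}\eps$. The only difference is in how that last bound is justified --- the paper covers the lost region by possibly overlapping rectangles of width $2\sqrt{2}\eps$ erected on the edges of $C$, while you use an exact per-edge trapezoid decomposition, which is slightly more delicate (edges of $C$ can vanish under erosion, so the clean one-trapezoid-per-edge tiling needs adjustment in general), but both arguments deliver the same inequality $\area(C)-\area(C^-)\leq L(C)\cdot 2\sqrt{2}\eps \leq 2\sqrt{2}\pi r\eps$.
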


\begin{proof}
    The erosion $C^-$ of $C$ is a convex polygon that lies inside $P^-$. Hence the area of $C'$ is at least as large as the area of $C^-$.

     As $C$ is a convex polygon of diameter at most $r$, the perimeter of $C$ is at most $\pi r$. As every eroded points from $C$ in order to obtain $C^-$ are inside $C$ and at a maximum distance of $2 \sqrt{2} \eps$ of the boundary of $C$, they are all included inside a set of rectangles that lie inside $C$ with the edges of $C$ as sides and width $2 \sqrt{2} \eps$. 
    Hence, the area difference between $C$ and its erosion is at most $2 \sqrt{2} \eps \pi r$, which proves the lemma.
\end{proof}

\begin{figure}[tb]
    \centering
    \includegraphics[scale=0.8]{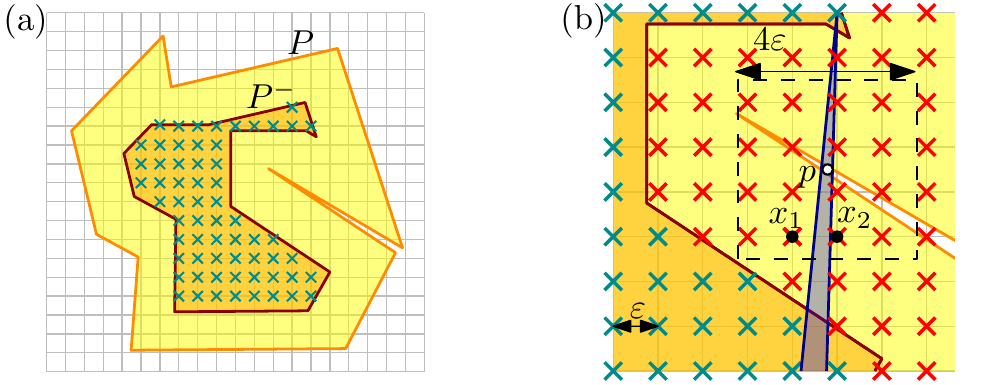}
    \caption{(a)~A polygon $P$, its erosion $P^-$, and the set $\Lambda_\eps \cap P^-$. (b)~To include the point $p$ that is outside $P$, $\conv(K_D)$ has to go between the lattice points within $L_\infty$ distance $2 \eps$ of $p$. }
    \label{fig:p_and_pminus}
\end{figure}

The digital solution may have portions that lie outside the input polygon $P$ of the continuous version. However, this portion cannot be too big, as shown in the following lemma.

\begin{lemma}\label{lemma:dig_to_cont3}
Let $P$ be a polygon of diameter $r$ and $P^-$ be the erosion of $P$. Let $S = P^- \cap \Lambda_\eps$, and $K_D$ be the largest digital convex subset of $S$. The following inequality holds:
\[\area(\conv(K_D) \setminus P) \leq 2 r \eps.\]
\end{lemma}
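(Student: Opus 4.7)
The plan is to prove, via a ``lattice-trapping'' argument, that every point of $\conv(K_D)\setminus P$ is forced to lie within $L_\infty$ distance $2\eps$ of $\partial\conv(K_D)$, and then to bound the area of the resulting thin strip by an elementary projection onto a diameter direction.

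\medskip

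\noindent\textbf{Step 1 (lattice trapping).} I would first show that if $p\in\conv(K_D)\setminus P$, then $d_\infty(p,\partial\conv(K_D))<2\eps$. Suppose for contradiction that $d_\infty(p,\partial\conv(K_D))\ge 2\eps$. Then the closed $L_\infty$ ball $B_\infty(p,2\eps)$, an axis-aligned square of side $4\eps$, is entirely contained in $\conv(K_D)$, and since its side length exceeds the lattice spacing $\eps$ it is guaranteed to contain at least one point $q\in\Lambda_\eps$ regardless of alignment. Digital convexity of $K_D$ gives $q\in\conv(K_D)\cap\Lambda_\eps=K_D\subseteq S\subseteq P^-$, and the definition of the $2\eps$-erosion then forces $B_\infty(q,2\eps)\subseteq P$. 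But $\|p-q\|_\infty\le 2\eps$, so $p\in P$, contradicting $p\in\conv(K_D)\setminus P$. This is precisely the geometric situation illustrated in Figure~\ref{fig:p_and_pminus}(b): to contain a point outside $P$, the hull $\conv(K_D)$ must slip through a corridor too narrow to touch any lattice point within $L_\infty$ distance $2\eps$ of that point.

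\medskip

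\noindent\textbf{Step 2 (area bound).} Step~1 places $\conv(K_D)\setminus P$ inside the ``$2\eps$ inner shell'' of $\conv(K_D)$. I would then decompose $\conv(K_D)\setminus P$ into its connected components (``pockets''). Because every vertex of $\conv(K_D)$ lies in $K_D\subseteq P^-\subseteq P$, no pocket contains a vertex of $\conv(K_D)$, so each pocket is bounded by a sub-chord of a single edge of $\conv(K_D)$ on one side and by a piece of $\partial P$ on the other. Step~1 bounds the perpendicular depth of every pocket, measured from its bounding chord, by $2\eps$. Projecting each chord onto the direction realizing the diameter of $\conv(K_D)$, the sub-chords on the upper chain of $\partial\conv(K_D)$ have pairwise disjoint projections summing to at most $r$, and similarly for the lower chain; combined with the perpendicular depth $2\eps$, a careful accounting (using that pockets on opposite chains cover disjoint slices at each $x$) yields the claimed $\area(\conv(K_D)\setminus P)\le 2r\eps$.

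\medskip

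\noindent The main obstacle is the lattice-trapping claim of Step~1: one must pick the closed $L_\infty$ ball of radius exactly $2\eps$, which is just large enough (side $4\eps\ge\eps$) to be guaranteed to trap a lattice point in every alignment, yet small enough that the erosion guarantee on that lattice point propagates all the way back to the center $p$. Once Step~1 is in place, Step~2 is a routine projection/integration in the same spirit as the area argument used in Lemma~\ref{lemma:dig_to_cont}.
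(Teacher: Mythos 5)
Your Step~1 is correct and is a cleaner restatement of the paper's starting observation: since every point of $S$ lies in $P^-$, no lattice point within $L_\infty$ distance $2\eps$ of a point $p \notin P$ can belong to $K_D$, so $\conv(K_D)$ must thread between consecutive lattice points to reach $p$. (Minor repair needed: when $\|p-q\|_\infty = 2\eps$ exactly the erosion gives no conclusion, but the square of side $4\eps$ contains a lattice point at strict $L_\infty$ distance $<2\eps$ from $p$, so the contradiction survives.) The genuine gap is in Step~2, where the stated constant is not actually delivered. First, $d_\infty(p,\partial\conv(K_D))<2\eps$ bounds the Euclidean (hence perpendicular) distance to the nearest edge only by $2\sqrt{2}\,\eps$, not $2\eps$, since the nearby boundary point may be diagonally displaced. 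Second, even granting depth $2\eps$, your bookkeeping gives $2r\eps$ for the upper chain plus $2r\eps$ for the lower chain: that the two pockets over a given abscissa are disjoint makes their combined vertical extent at most $4\eps$, not $2\eps$. Third, the claim that no pocket straddles a vertex of $\conv(K_D)$ is asserted but not proved (it does hold, because each vertex lies in $K_D\subseteq P^-$, so an entire $L_\infty$ ball of radius $2\eps$ around it lies in $P$ and blocks the collar there, but a word is needed to rule out a component passing around that ball). Carried out rigorously, your collar argument bounds each pocket's area by its chord length times $2\sqrt{2}\,\eps$ and the chords by the perimeter $\le \pi r$, yielding $\area(\conv(K_D)\setminus P)\le 2\sqrt{2}\,\pi r\eps$. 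This is $O(r\eps)$ and fully sufficient for the theorem this lemma feeds into, but it does not prove the inequality $\le 2r\eps$ as stated.

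The paper obtains the constant $2$ by a mechanism your shell argument cannot reproduce: to reach $p$, the hull must pass through a corridor of width $\eps$ between two consecutive lattice points while $p$ lies at distance more than $\eps$ beyond them, which forces a turning angle of at least $\pi - 2\arctan(1/2) > 2\pi/3$ at each such excursion; since the total turning of a convex polygon is $2\pi$, there are at most two excursion regions, each of width at most $\eps$ and diameter at most $r$, hence of area at most $r\eps$ apiece. If $O(r\eps)$ is all you need, your route is arguably simpler and more robust; to prove the lemma as written you must either import a counting argument of this turning-angle type or restate the lemma with your larger explicit constant.
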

\begin{proof}
    Let $p$ be a point in $\conv(K_D) \setminus P$. 
    As $S$ is included inside $P^-$, all the lattice points within $L_\infty$ distance $\eps$ of $p$ are not in $S$ (see Figure~\ref{fig:p_and_pminus}).
    All $16$ lattice points at a $L_\infty$ distance less than $2 \eps$ of $p$ are not in $S$. Hence, in order to include $p$, $\conv(K_D)$ has to lie between two vertically (or horizontally) consecutive lattice points $x_1$ and $x_2$, which are separated by distance $\eps$. Furthermore $p$ is at a horizontal (or vertical) distance strictly greater than $\eps$ from $x_1$ and $x_2$.
    The widest angle the incoming and outgoing edges of $C$ can form is hence $2\arctan(1/2)$, effectively forming a turning angle of at least $\pi - 2\arctan(1/2)$.
    As the sum of turning angles inside a convex polygon is equal to $2 \pi$ and can never decrease, and as $\pi - 2\arctan(1/2) > 2\pi/3$ such a turning angle can only happen twice. 
    Also, as in order to include any point $p$ outside of $P$, $\conv(K_D)$ has to go in between $x_1$ and $x_2$, the width of this (possible non-contiguous region) including $p$ is at most $\eps$ and the diameter at most $r$, hence, the area is bounded by $r\eps$.
    Therefore, there can be no more than two such regions in $\conv(K_D)$ (even though each of them can enter and leave $P$ multiple times), which proves the lemma.
\end{proof}

% Thanks to lemma~\ref{lemma:dig_to_cont} and lemma~\ref{lemma:dig_to_cont2}, we can conclude that:
Using lemma~\ref{lemma:dig_to_cont2}, it follows that $A_C - A^-_C \leq 2 \sqrt{2} \pi r \eps$.
Lemma~\ref{lemma:dig_to_cont} gives us that $A^-_C - A_D \leq 6 \sqrt{2} \pi r \eps + 16\eps^2$.
Lemma~\ref{lemma:dig_to_cont3} gives us that %$A_D - A_C \leq 2 r \eps$  
$A_D - 2 r \eps \leq A_C$. Hence,
\[A_C - 8 \sqrt{2} \pi r \eps-16\eps^2 \leq A_D \leq A_C + 2r\eps,\]
proving the following theorem.

\begin{theorem}
Let $A_C$ be the area of the solution $K_C$ of Problem~\ref{prob:peeling_cont} with input polygon $P$ of diameter $r$. Let $\eps>0$ be a parameter and $S = \Lambda_\eps \cap P^-$, where $P^-$ is the erosion of $P$ by $2\eps$ and $\Lambda_\eps$ is the lattice of size $\eps$. The area $A_D$ of the convex hull of the solution $K_D$ of Problem~\ref{prob:peeling} with input $S$ satisfies
$|A_C - A_D| = O(r \eps)$.
\end{theorem}

The polygon $\conv(K_D)$ in the previous theorem may partially extend outside $P$. Nevertheless, the solution $K_D$ of Problem~\ref{prob:peeling} can be used to obtain a convex polygon $K \subseteq P$ which has an area $A$ satisfying $|A_C - A_D| = O(r \eps)$.

The same proof strategy can be applied to obtain an approximation to the continuous version of the 2-potato-peeling problem using the digital version of the problem.
\end{document}